\documentclass[12pt]{article}
\usepackage{graphicx}
\usepackage{amssymb}
\usepackage{amsmath}
\newcommand{\norm}[1]{\left\Vert#1\right\Vert}

\newcommand{\ve}{\varepsilon}

\numberwithin{equation}{section}
\def\p{{\partial}}

\def\BR{{\mathbb R}}

\def\clc{{\mathcal C}}

\def\nn{\nonumber}
\newtheorem{Pa}{Paper}[section]
\newtheorem{Tm}[Pa]{{\bf Theorem}}

\newtheorem{Cy}[Pa]{{\bf Corollary}}
\newtheorem{Rk}[Pa]{{\bf Remark}}

\newtheorem{Dn}[Pa]{{\bf Definition}}
\newtheorem{Pn}[Pa]{{\bf Proposition}}
\usepackage{amsmath}

\title{ Feynman's  linear divergence problem}

\author{Alexander Sakhnovich\footnote{Faculty of Mathematics,
University
of
Vienna, Oskar-Morgenstern-Platz 1, A-1090 Vienna,
Austria;
E-mail: oleksandr.sakhnovych@univie.ac.at} \, and \, Lev Sakhnovich\footnote{Retired from Courant institute; 99 Cove ave. Milford, CT 06461, USA; E-mail: lsakhnovich@gmail.com} }

\date{}

\begin{document}

\maketitle


 \noindent\textbf{MSC (2020):} 81T15, 81U20

\noindent {\bf Keywords:} generalized scattering operator, deviation factor, secondary generalized scattering operator, perturbation operator function, ultraviolet case.  

\begin{abstract} First, we consider  generalized wave and scattering operators and derive modifications  of   commutation relations (between scattering operators and unperturbed operators) when the corresponding deviation factors behave as  $\exp\{i t \clc_{\pm}\}$
for $t\to \pm \infty$.
Then, we construct so called  secondary generalized scattering operators for the related case of linear divergence in QED, which
gives a positive answer  (in that case) to the well-known problem of  J. R. Oppenheimer regarding scattering operators in QED:
``Can the procedure be freed of the expansion in $\varepsilon$ and carried out rigorously?"
\end{abstract}

\section{Introduction} \label{intro}
{\bf 1.} The problem of the divergences in the scattering theory for quantum electrodynamics has a long history starting in the 1940s with  the fundamental works 
by
R. Feynman \cite{Fey0, Fey1} (see also  \cite{AB} and the references therein). The divergences belong  to one of the  three types:
to the logarithmic,  linear or  quadratic cases (see \cite{AB}). The  logarithmic case was rigorously treated  
in the paper \cite {Sakh4} by one of the coauthors. This work may be considered as a continuation of \cite{Sakh4}, and the linear divergencies are studied here.

The presentation in \cite{Sakh4}  is started with  the introduction of the self-adjoint operators $A,\,A_0,\,A_1$ which are defined in Hilbert space $H$ and are connected by the relation
\begin{equation} A=A_0+{\varepsilon}A_1. \label{1.1}\end{equation}
Here, $A$ is a perturbed operator, $A_0$ is an unperturbed operator, and $A_1$ is a perturbation operator. Next, we introduce the operator function
\begin{equation} S(t,\tau,\varepsilon)=\exp(i t A_0)\exp(-i tA)\exp(i\tau A))\exp(-i\tau A_0), \label{1.2}\end{equation}
which is closely related to the scattering operator $S(A,A_0)$ (see \cite{Sakh2, Sakh1, Sakh4}).
 It follows from \eqref{1.2} that
\begin{equation}\frac{\partial}{\partial{t}}S(t,\tau,\varepsilon)=-i{\varepsilon}V(t)S(t,\tau,\varepsilon),\quad
\frac{\partial}{\partial{\tau}}S(t,\tau,\varepsilon)=i{\varepsilon}S(t,\tau,\varepsilon)V(\tau), \label{1.3}\end{equation}
where
\begin{equation} V(t)=\exp(itA_0)A_1\exp(-itA_0),\quad S(t,t,\varepsilon)=I,\label{1.4}\end{equation}
and $I$ is the identity operator.
According to \eqref{1.4},  the  self-adjoint operator function $V(t)$  may be considered (at each $t$) as a  special representation of the perturbation operator $A_1$.
\begin{Dn}\label{Definition 1.2}The self-adjoint operator function $V(t)$ in \eqref{1.3} is  called the perturbation operator function.\end{Dn} 

Relations \eqref{1.3} and \eqref{1.4} are used in Section \ref{ApA}  in order to derive commutative properties \eqref{4.8} and \eqref{4.9} (for the {\it generalized scattering operators} $S(A,A_0)$) as well
as  the generalizations \eqref{4.7} and \eqref{4.9+} of \eqref{4.8} and \eqref{4.9},  respectively.  Recall that the commutation with the unperturbed operator $A_0$ is one of the main properties of the scattering operator.
Equalities \eqref{4.7}--\eqref{4.9+} are modified commutative relations for the considered here case of  the  generalized scattering operators.

Further, in Sections \ref{sec2} and \ref{secEx}  we do not assume that the operators  $A_0$ and $A_1$ are known or
exist and directly use  the equation \eqref{1.3}.  In this way, we, similar to \cite{Sakh4}, follow the suggestions of Heisenberg’s S-program  \cite{Hei}.
In Section \ref{sec2}, we consider  perturbation operator function $V(t)$ of the form \eqref{3.6}--\eqref{3.6+}. For instance, we have
$$V(t)=C_{+}+\frac{B_+}{t}+u(t) \quad {\mathrm{for}} \quad t \geq 1,$$
whereas we had logarithmic divergence with $C_+=0$ in \cite{Sakh4}.
In Section~\ref{secEx},
the perturbation operator function is given by the formula $V(L,q)=C_+ + \frac{1}{L}B_+(q)+u(L,q)$, where $t$ is  substituted by the length $L$ $(L\geq 1)$
(space-time approach \cite{Fey1}) in the formula for $V(t)$ above and the corresponding operator $B_+$ is the operator of multiplication by $B_+(q)$ $(q\in \BR^4)$.   Here, the notation $\BR$ stands for the real axis. 

In Theorem \ref{Theorem 3.5} we construct  the so called  secondary  generalized    scattering operator  $S^R(+\infty,-\infty,\ve)$.
In formulas \eqref{u16} and \eqref{u17},  the   secondary  generalized    scattering operator  $S^R(+\infty,\ve)$ is constructed for the ultraviolet case.
In both cases, we obtain a positive answer to the well-known problem of
J. R. Oppenheimer \cite{Opp} regarding scattering operators in QED: ``Can the
procedure be freed of the expansion in $\ve$ and carried out rigorously?”

{\bf Notations.} 
The notation   $A^*$ stands for the  adjoint to $A$ operator and the symbol $\Longrightarrow$ means the operator convergence by norm.
\section{Generalized wave operators,  deviation \\ factors and commutation property}\label{ApA}
{\bf 1.} Wave operators play an essential role in many problems of mathematical physics.
However, the wave operators do not exist when the initial and/or final states of the system may not be regarded as free.
In these cases, one has to consider the {\it generalized wave operators}  (see, e.g.,  \cite{Do, Sakh2,  Sakh3, Sakh10, Sakh10+, Sak}
as well as Appendix A in \cite{Sakh4}).

Let $A$ and $A_0$ be linear self-adjoint (not necessary bounded) operators with  absolutely continuous spectrum acting in some Hilbert space $H$. 
Then, the   generalized wave operators $W_{\pm}(A,A_{0})$ and  the unitary deviation factor (operator function) $W_{0}(t)$ acting in $H$ are defined
in the following way.
\begin{Dn}\label{Dn7.1}
 Operators  $W_{\pm}(A,A_{0})$ and operator function  $W_{0}(t)$ are called generalized wave operators and a deviation factor, respectively, if
the following conditions are fulfilled:

1. The limits
\begin{equation}W_{\pm}(A,A_{0})
=s{-}{\lim_{t{\to}\pm\infty}}\, [e^{iAt}e^{-iA_{0}t}W_{0}^{-1}(t)]
\label{7.1}\end{equation}
exist in the sense of strong convergence $(s{-}{\lim})$.

2. The operators $W_{0}(t)$ and  $W_{0}^{-1}(t)$ 
are unitary for all $t\in \BR$ and
\begin{equation}\lim_{t{\to}\pm\infty}W_{0}(t+\tau)W_{0}^{-1}(t)=\exp\{i\tau \clc_{\pm}\} \quad (\clc_{\pm}=\clc_{\pm}^*) \quad {\mathrm{for\, all}}\quad \tau=\overline{\tau}.
\label{7.2}\end{equation}

3. The following commutation relations  hold:
\begin{align}& W_{0}(t)A_{0}=A_{0}W_{0}(t)\,\, {\mathrm{for}} \,\, t\in \BR; \label{7.3}
\\ & 
W_{0}(t)W_{0}(t+\tau)=W_{0}(t+\tau)W_{0}(t) \,\, {\mathrm{for}}\,\, t,\tau \geq 0 \,\, {\mathrm{and \,\, for}}\,\, t,\tau \leq 0.
\label{7.3+}\end{align}
\end{Dn}
If  $W_{0}(t)\equiv I$, then the operators $W_{\pm}(A,A_{0})$ are usual wave
operators.
\begin{Rk}\label{RkDn} Definition \ref{Dn7.1}, which will be useful in the considerations related to linear divergence,
is more general than in the previous works $($compare, e.g., with \cite{Sakh2} and \cite[Appendix A]{Sakh4}$)$.
\end{Rk}
\begin{Rk}\label{RkC} Relations \eqref{7.2} and \eqref{7.3} imply the commutativity
\begin{equation}A_0\exp\{i\tau \clc_{\pm}\} =\exp\{i\tau \clc_{\pm}\}A_0 \quad  {\mathrm{for\, all}}\quad \tau=\overline{\tau}.
\label{7.2+}\end{equation}
\end{Rk}
\begin{Pn}\label{Pn4.2} The generalized wave operators have the property
 \begin{equation}\exp\{i \tau A\}W_{\pm}(A,A_0)=W_{\pm}(A,A_0)\exp\{i \tau (A_{0}+\clc_{\pm})\}.\label{4.4} \end{equation}\end{Pn}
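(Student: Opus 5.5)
We compute $\exp\{i\tau A\}W_{\pm}(A,A_0)$ directly from the defining limit \eqref{7.1}, after shifting the time variable by $\tau$. Since $e^{i\tau A}$ is bounded, it passes through the strong limit:
\begin{equation*}
e^{i\tau A}W_{\pm}(A,A_0)=s{-}\lim_{t\to\pm\infty} e^{iA(t+\tau)}e^{-iA_0 t}W_0^{-1}(t).
\end{equation*}
Next we insert the identity $e^{-iA_0(t+\tau)}W_0^{-1}(t+\tau)\,W_0(t+\tau)e^{iA_0(t+\tau)}=I$ after $e^{iA(t+\tau)}$. This gives the factorization
\begin{equation*}
e^{iA(t+\tau)}e^{-iA_0(t+\tau)}W_0^{-1}(t+\tau)\cdot W_0(t+\tau)e^{iA_0\tau}W_0^{-1}(t).
\end{equation*}
By \eqref{7.3}, $W_0(t+\tau)$ commutes with $e^{iA_0\tau}$, so the second factor equals $e^{iA_0\tau}W_0(t+\tau)W_0^{-1}(t)$.

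The first factor converges strongly to $W_{\pm}(A,A_0)$ as $t\to\pm\infty$, because $t+\tau\to\pm\infty$. It is also uniformly bounded in norm by $1$, being a product of unitaries. The second factor converges to $e^{iA_0\tau}\exp\{i\tau\clc_{\pm}\}$ by \eqref{7.2}. This rests on one interpretive choice: we read the limit in \eqref{7.2} as (at least) a strong limit.

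We then use the standard fact that if $X_t\to X$ strongly with $\sup\|X_t\|<\infty$ and $Y_t\to Y$ strongly, then $X_tY_t\to XY$ strongly. This yields
\begin{equation*}
e^{i\tau A}W_{\pm}(A,A_0)=W_{\pm}(A,A_0)\,e^{iA_0\tau}e^{i\tau\clc_{\pm}}.
\end{equation*}

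The remaining step, which we expect to be the main (if mild) obstacle, is to identify $e^{iA_0\tau}e^{i\tau\clc_{\pm}}$ with $\exp\{i\tau(A_0+\clc_{\pm})\}$.

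First we show that $U(\tau):=\exp\{i\tau\clc_{\pm}\}$ is a one-parameter unitary group, so that the notation is consistent. We write
\begin{equation*}
W_0(t+\tau+\sigma)W_0^{-1}(t)=\bigl[W_0\bigl((t+\sigma)+\tau\bigr)W_0^{-1}(t+\sigma)\bigr]\bigl[W_0(t+\sigma)W_0^{-1}(t)\bigr]
\end{equation*}
and pass to the limit, using the same bounded-product argument. This gives $U(\tau+\sigma)=U(\tau)U(\sigma)$.

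By Remark \ref{RkC} (i.e. \eqref{7.2+}), $U(\tau)$ commutes with $e^{i\sigma A_0}$ for all $\tau,\sigma$. Hence $\tau\mapsto e^{iA_0\tau}U(\tau)$ is again a unitary group. It is strongly continuous, assuming $U$ is, which is implicit in writing it as $\exp\{i\tau\clc_{\pm}\}$ with self-adjoint $\clc_{\pm}$.

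By Stone's theorem this group has a self-adjoint generator. Because the two groups commute, that generator is the closure of $A_0+\clc_{\pm}$ on the common core, and this is the meaning of $\exp\{i\tau(A_0+\clc_{\pm})\}$ in \eqref{4.4}. With this identification, \eqref{4.4} follows.
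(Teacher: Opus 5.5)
Your proof is correct and is essentially the paper's argument: shift $t\mapsto t+\tau$, insert $W_0^{-1}(t+\tau)W_0(t+\tau)$, use \eqref{7.3} to move $e^{i\tau A_0}$ past $W_0$, pass to the limit via \eqref{7.1} and \eqref{7.2}, and use \eqref{7.2+} to identify $e^{i\tau A_0}e^{i\tau\clc_{\pm}}$ with $\exp\{i\tau(A_0+\clc_{\pm})\}$. Your bounded-product lemma and Stone-theorem identification only make explicit what the paper leaves implicit; the group-law check is redundant because \eqref{7.2} already names the limit $\exp\{i\tau\clc_{\pm}\}$, and your strong reading of \eqref{7.2} costs nothing, since a weak limit of unitaries that is itself unitary is automatically a strong limit.
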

 {\it Proof.}
 It follows from \eqref{7.1} and \eqref{7.3} that 
\begin{align} \nn \exp\{i\tau A\}W_{\pm}(A,A_0)= & s{-}{\lim_{t{\to}\pm\infty}} \exp \{ i (t+\tau)A\}\exp\{-i(t+\tau)A_{0}\}W_{0}^{-1}(t)
\\ & \nn \times \exp\{i \tau A_{0}\}
\\ 
= & \nn s{-}{\lim_{t{\to}\pm\infty}} \exp \{ i (t+\tau)A\}\exp\{-i(t+\tau)A_{0}\}W_{0}^{-1}(t+\tau)
\\  & \times  \big(W_0(t+\tau)W_0^{-1}(t)\big)\exp\{i \tau A_{0}\}.
\label{4.2} \end{align}
Hence, \eqref{7.1} and \eqref{7.2}, \eqref{7.2+} yield \eqref{4.4}. $\Box$

According to Definition \ref{Dn7.1}, the generalized wave operators are unitary. Therefore, our next assertion follows from  \eqref{4.4}.
\begin{Cy}\label{Corollary 4.3}The operators $\exp\{i \tau A\}$ and $\exp\{i \tau (A_{0}+\clc_{\pm})\}$ are unitarily equivalent.\end{Cy}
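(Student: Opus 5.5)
The plan is to read the corollary off directly from Proposition \ref{Pn4.2}, using that $W_{\pm}(A,A_0)$ are unitary, as the paper asserts just before the statement. Put $U_{\pm}:=W_{\pm}(A,A_0)$. Multiplying \eqref{4.4} on the right by $U_{\pm}^{-1}=U_{\pm}^*$ gives
\begin{equation*}
\exp\{i\tau A\}=U_{\pm}\exp\{i\tau(A_0+\clc_{\pm})\}U_{\pm}^{*}\quad\text{for all real }\tau .
\end{equation*}
This is exactly the unitary equivalence claimed, realized by one operator $U_{\pm}$ that works for every $\tau$ simultaneously.

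Before this step I would check that $\exp\{i\tau(A_0+\clc_{\pm})\}$ is meaningful. By \eqref{7.2}, the family $\tau\mapsto\exp\{i\tau\clc_{\pm}\}$ is a limit of unitaries, and it defines a one-parameter unitary group with self-adjoint generator $\clc_{\pm}$. By \eqref{7.2+}, this group commutes with $\exp\{is A_0\}$ for all real $s$, since commuting with $A_0$ means commuting with its spectral measure. Therefore the product $\exp\{i\tau A_0\}\exp\{i\tau\clc_{\pm}\}$ is again a strongly continuous one-parameter unitary group. Its generator by Stone's theorem is the closure of $A_0+\clc_{\pm}$, and this is the meaning of $\exp\{i\tau(A_0+\clc_{\pm})\}$. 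This is consistent with how the group enters the proof of \eqref{4.4}: there the limit of $W_0(t+\tau)W_0^{-1}(t)$ multiplies $\exp\{i\tau A_0\}$.

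The main obstacle is unitarity of $U_{\pm}$. Each approximant $e^{iAt}e^{-iA_0t}W_0^{-1}(t)$ is unitary, so the strong limit in \eqref{7.1} is automatically an isometry. Surjectivity, however, does not follow from strong convergence alone. I will take unitarity as part of the setting, as the paper states, and note that completeness of the generalized wave operators is what is really being assumed.

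If only isometry were available, \eqref{4.4} would still show that $\exp\{i\tau A\}$ restricted to the range of $U_{\pm}$ (an invariant subspace, by \eqref{4.4}) is unitarily equivalent to $\exp\{i\tau(A_0+\clc_{\pm})\}$. Under the stated unitarity of $U_{\pm}$, the full equivalence follows. By Stone's theorem it also gives unitary equivalence of the generators, $A=U_{\pm}(A_0+\clc_{\pm})U_{\pm}^*$.
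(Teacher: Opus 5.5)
Your proof is correct and matches the paper's: the paper also invokes the (asserted) unitarity of $W_{\pm}(A,A_0)$ and reads off $\exp\{i\tau A\}=W_{\pm}(A,A_0)\exp\{i\tau(A_0+\clc_{\pm})\}W_{\pm}^*(A,A_0)$ from \eqref{4.4}. You rightly point out that the strong limit in \eqref{7.1} is a priori only an isometry, so unitarity is an extra (completeness) assumption rather than a consequence of Definition~\ref{Dn7.1}; this sharpens the paper's one-line justification.
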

\begin{Dn}\label{Definition 7.2}The generalized scattering operator $S(A,A_0)$
is defined by the formula:
\begin{equation}S(A,A_0)=W^{*}_{+}(A,A_0)W_{-}(A,A_0),\label{7.4}
\end{equation}
where $W_{\pm}(A,A_{0})$ are generalized wave operators.\end{Dn}
Relations \eqref{4.4} and \eqref{7.4} yield the next corollary.
\begin{Cy}\label{Cy4.6} The following equality is valid for the generalized scattering operator $S(A,A_0):$
\begin{equation}S(A,A_0)=\exp\{-i \tau(A_0+\clc_+)\}S(A,A_0)\exp\{i\tau(A_0+\clc_-)\}. \label{4.7}\end{equation}\end{Cy}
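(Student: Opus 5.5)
Corollary \ref{Cy4.6} can be read off directly from Proposition \ref{Pn4.2}, using the unitarity of the generalized wave operators noted after that proposition. Write $W_\pm=W_\pm(A,A_0)$ for brevity.

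\emph{Step 1 (the adjoint of \eqref{4.4}).} Apply \eqref{4.4} for the sign $+$ with $\tau$ replaced by $-\tau$:
\[
\exp\{-i\tau A\}W_+=W_+\exp\{-i\tau(A_0+\clc_+)\}.
\]
Take adjoints of both sides. Here $A$ and $A_0+\clc_+$ are self-adjoint, so $\exp\{-i\tau A\}$ and $\exp\{-i\tau(A_0+\clc_+)\}$ are unitary with adjoints $\exp\{i\tau A\}$ and $\exp\{i\tau(A_0+\clc_+)\}$. This gives
\[
W_+^{*}\exp\{i\tau A\}=\exp\{i\tau(A_0+\clc_+)\}W_+^{*}.
\]
If one does not want to invoke self-adjointness of $A_0+\clc_+$ as an operator sum, it is enough to know that $\exp\{i\tau(A_0+\clc_+)\}$ is a unitary group. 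This follows from \eqref{7.2+}, since $A_0$ commutes with $\exp\{i\tau\clc_+\}$. One may then interpret it as $\exp\{i\tau A_0\}\exp\{i\tau\clc_+\}$. I expect this domain point to be the only delicate issue, and in the paper's formal setting it can be treated as given.

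\emph{Step 2 (the sign $-$).} Apply \eqref{4.4} for the sign $-$ exactly as stated:
\[
\exp\{i\tau A\}W_-=W_-\exp\{i\tau(A_0+\clc_-)\}.
\]

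\emph{Step 3 (combine).} Using \eqref{7.4}, compute
\[
\exp\{i\tau(A_0+\clc_+)\}S(A,A_0)=\exp\{i\tau(A_0+\clc_+)\}W_+^{*}W_-.
\]
By Step 1 this equals $W_+^{*}\exp\{i\tau A\}W_-$. By Step 2 it then equals $W_+^{*}W_-\exp\{i\tau(A_0+\clc_-)\}=S(A,A_0)\exp\{i\tau(A_0+\clc_-)\}$.

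Finally, multiply on the left by the unitary operator $\exp\{-i\tau(A_0+\clc_+)\}$, the inverse of $\exp\{i\tau(A_0+\clc_+)\}$. This yields \eqref{4.7} for all real $\tau$.
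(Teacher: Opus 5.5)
Your proof is correct and follows the paper's route: the paper only says that \eqref{4.4} and \eqref{7.4} yield \eqref{4.7}, and your argument (the adjoint of \eqref{4.4} for $W_+$, combined with \eqref{4.4} for $W_-$ and $S(A,A_0)=W_+^*W_-$) is exactly that combination, written out. Despite your opening sentence, your argument never actually uses unitarity of $W_\pm$; it uses only adjoints and the unitarity of the exponential groups, which is a small advantage, since strong limits of unitary operators are in general only isometries.
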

If $\clc_+=\clc_-=\clc$, formula \eqref{4.7} turns into a commutative formula:
\begin{equation}\exp\{i \tau(A_0+\clc)\}S(A,A_0)=S(A,A_0)\exp\{i\tau(A_0+\clc)\}. \label{4.8}\end{equation}

\begin{Rk}\label{RkCom} If $\clc_+=\clc_-=\clc$,  commutative formula \eqref{4.8} yields the commutative property
\begin{equation}(A_0+\clc)S(A,A_0)=S(A,A_0)(A_0+\clc). \label{4.9}\end{equation}
If the equality  $\clc_+=\clc_-$ is not required, formula \eqref{4.7} implies a more general relation:
\begin{equation}(A_0+\clc_+)S(A,A_0)=S(A,A_0)(A_0+\clc_-). \label{4.9+}\end{equation}
\end{Rk}
\section{Secondary  generalized    scattering  \\  and perturbation operators, \\ linear divergence}\label{sec2}
{\bf 1.} Recall that the operator function $S(t,\tau,\varepsilon)$ given by \eqref{1.2} satisfies \eqref{1.3}:
\begin{equation}\frac{\partial}{\partial{t}}S(t,\tau,\varepsilon)=-i{\varepsilon}V(t)S(t,\tau,\varepsilon),\quad
\frac{\partial}{\partial{\tau}}S(t,\tau,\varepsilon)=i{\varepsilon}S(t,\tau,\varepsilon)V(\tau), \label{3.2}\end{equation}
where $t$ and $\tau$ belong to the real axis  $\BR$ and $V(t)$ is given by \eqref{1.4}.

Further in the text, the operators $A,\, A_1,\,A_0$, which appear in \eqref{1.1}, \eqref{1.2} and \eqref{1.4}, are unknown
(or even do not exist). Thus, we do not suppose that the operator $V(t)$ in \eqref{3.2} has the form \eqref{1.4}. However, similar to \cite{Sakh4} we assume that \eqref{3.2} holds for some
self-adjoint perturbation operator function $V(t)$ and that
\begin{equation} S(t,t,\varepsilon)=I.\label{3.3}\end{equation}
 The operator $V(t)$ itself is now called a perturbation operator. Using the method of successive approximation and relations \eqref{3.2}, \eqref{3.3},
we (similar to \cite[Proposition 3.1]{Sakh4}) obtain our next proposition.
\begin{Pn}\label{Proposition 3.1} Assume that  $V(t)$ is a self-adjoint, continuous and bounded operator function in the domain $-T\leq t \leq T$.  Then, there exists such 
$\varepsilon _{T}>0$ that the series
\begin{equation} S(t,\tau,\varepsilon)=\sum_{p=0}^{\infty}S_{p}(t,\tau)\varepsilon^{p}\qquad (S_0(t,\tau) \equiv I) \quad \label{3.4}\end{equation}
is convergent in the domain  $-T \leq \tau\leq  t\leq T$  for  $|\varepsilon| \leq \varepsilon _{T}$. \end{Pn}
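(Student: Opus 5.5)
We convert the differential relations \eqref{3.2} with the initial condition \eqref{3.3} into an integral equation and solve it by successive approximations. Fix $\tau$ in $[-T,T]$. Integrating the first relation in \eqref{3.2} from $\tau$ to $t$ and using $S(\tau,\tau,\varepsilon)=I$ gives
\begin{equation*}
S(t,\tau,\varepsilon)=I-i\varepsilon\int_{\tau}^{t}V(s)S(s,\tau,\varepsilon)\,ds ,\qquad -T\leq \tau\leq t\leq T .
\end{equation*}
Substituting the ansatz \eqref{3.4} and comparing powers of $\varepsilon$ yields the recursion
\begin{equation*}
S_0(t,\tau)\equiv I,\qquad S_{p}(t,\tau)=-i\int_{\tau}^{t}V(s)S_{p-1}(s,\tau)\,ds\quad (p\geq 1).
\end{equation*}
Since $V$ is norm-continuous and bounded on $[-T,T]$, each $S_p$ is a well-defined operator function, continuous in $(t,\tau)$ in the operator norm. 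The integrals are understood as Riemann integrals in the norm topology.

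The key estimate comes from induction. Put $M=\sup_{|s|\leq T}\norm{V(s)}<\infty$. Then
\begin{equation*}
\norm{S_p(t,\tau)}\leq \frac{M^p (t-\tau)^p}{p!}\leq \frac{(2TM)^p}{p!}.
\end{equation*}
The inductive step is the simple integral $\int_\tau^t M^{p}(s-\tau)^{p-1}/(p-1)!\,ds$. Hence the series \eqref{3.4} is dominated by $\sum_p (2TM|\varepsilon|)^p/p!=e^{2TM|\varepsilon|}$. It therefore converges in norm, uniformly on the triangle $-T\leq\tau\leq t\leq T$, for every $\varepsilon$. In particular it converges for $|\varepsilon|\leq\varepsilon_T$ with any choice of $\varepsilon_T>0$, for instance $\varepsilon_T=1$.

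It remains to identify the sum with $S(t,\tau,\varepsilon)$. Uniform convergence lets us integrate the series term by term, so the sum satisfies the integral equation above. Differentiating gives the first relation in \eqref{3.2} together with \eqref{3.3}. To conclude that the sum coincides with the given $S$, we use uniqueness for the linear Volterra equation. The difference $D$ of two solutions satisfies $\norm{D(t)}\leq |\varepsilon| M\int_\tau^t\norm{D(s)}ds$, and Gronwall's inequality gives $D=0$.

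I expect the only delicate point to be this identification. It requires that the given $S$ be regular enough, for instance norm-differentiable in $t$, for \eqref{3.2} to be integrated. If only strong differentiability is available, I would run the same argument on vectors $S(t,\tau,\varepsilon)f$, $f\in H$. The estimate, and hence the convergence, is routine. The second relation in \eqref{3.2} is not needed for the proof, though it is compatible with the construction.
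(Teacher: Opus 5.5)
Your proof is correct and is essentially the paper's argument: the paper also obtains Proposition~\ref{Proposition 3.1} by successive approximations from \eqref{3.2} and \eqref{3.3}, which gives exactly your recursion \eqref{3.5}. Your factorial bound $\norm{S_p(t,\tau)}\leq (M(t-\tau))^p/p!$ in fact shows that the series converges for every $\varepsilon$, so any $\varepsilon_T>0$ works, and your Gronwall uniqueness step correctly identifies the sum with the given $S(t,\tau,\varepsilon)$.
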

Thus, taking into account \eqref{3.2} and the equality $S_{p+1}(t,t)=0$ for $p\geq 0$, we have
\begin{align} & S_{p+1}(t,\tau)=-i\int_{\tau}^{t}V(t_1)S_p(t_1,\tau)dt_1=-i\int_{\tau}^{t}S_p(t,\tau_1)V(\tau_1)d\tau_1.\label{3.5}\end{align}
Hence, for the first approximation $S_1$ we derive
\begin{align} & S_1(t,\tau)=-i\int_{\tau}^{t}V(t_1)dt_1.\label{3.5+}\end{align}

In the present article, we consider the operator function $V(t)$  of the form
\begin{align} \label{3.6} & V(t)=C_{+}+\frac{B_+}{t}+u(t) \quad {\mathrm{for}} \quad t \geq 1,
\\ 
\label{3.7} & V(t)=u(t) \quad {\mathrm{for}} \quad |t| \leq 1,
\\  &
V(t)=C_{-}+\frac{B_-}{t}+u(t)  \quad {\mathrm{for}} \quad t \leq -1, \label{3.6+}\end{align}
where $B_{\pm}$ and $C_{\pm}$ are self-adjoint, bounded operators and $u(t)$ is a self-adjoint, continuous operator function such that
\begin{equation} \norm{u(t)}=O(|t|^{-\nu}) \quad {\mathrm{for}} \quad  |t| \to  \infty \qquad (\nu>1).\label{3.8}\end{equation}
In particular, relations \eqref{3.5+}--\eqref{3.6+} imply that
\begin{align}& S_1(t,\tau)=-i\left(C_{+}(t-\tau)+B_{+}\ln(t/\tau)+\int_{\tau}^{t}u(s)ds\right)\quad {\mathrm{for}} \quad t \geq \tau \geq 1,  \label{3.9}
\\ &
S_1(t,\tau)=i\left(C_{-}(\tau-t)+B_{-}\ln(\tau/t)-\int_{\tau}^{t}u(s)ds\right)\quad {\mathrm{for}} \quad \tau \leq t \leq -1. \label{3.10}\end{align}
Let us introduce  (see \cite{Sak, Sakh4} on this topic) the continuous deviation factors  given by
\begin{align}\label{3.11} &\frac{d}{dt}W_{0}(t,\varepsilon)=i{\varepsilon}W_{0}(t,\varepsilon)V_{0}(t)\quad (t\in \BR),\quad  W_0(0)=I,
\end{align}
where
\begin{align}&   V_0(t)=C_{+}+\frac{B_+}{t} \quad (t \geq 1), \quad
V_0(t)=C_{-}+\frac{B_-}{t}\quad (t\leq -1) \label{3.12}, 
\\ &
V_0(t)=0 \quad (|t|<1). \label{3.12+}
\end{align}
The next  assertion  follows from \eqref{3.11}--\eqref{3.12+}.
\begin{Pn}\label{Proposition 3.2} If $B_{\pm}C_{\pm}=C_{\pm}B_{\pm}$, then 
\begin{align}& \label{z1} W_0(t,\varepsilon)=\exp\{i\varepsilon\big(C_{+}(t-1)+B_{+}\ln(t)\big)\} \quad {\mathrm{for}} \quad t \geq 1,
\\ & \label{z2}
W_0(t,\varepsilon)=\exp\{i\varepsilon\big(C_{-}(t+1)+B_{-}\ln(|t|)\big)\} \quad {\mathrm{for}} \quad  t \leq -1, 
\\ & \label{z3}
W_0(t,\varepsilon)\equiv I \quad {\mathrm{for}} \quad  |t| \leq 1.
\end{align}
\end{Pn}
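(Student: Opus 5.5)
The plan is to solve the linear ODE \eqref{3.11} separately on the three intervals $[-1,1]$, $[1,\infty)$ and $(-\infty,-1]$. Continuity of $W_0$ glues the pieces together. On each interval we guess the solution in exponential form and verify it by differentiation. Uniqueness comes from the standard uniqueness theorem for linear operator differential equations with bounded, piecewise continuous coefficients.

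First, on $|t|<1$ we have $V_0(t)=0$ by \eqref{3.12+}. Hence $\frac{d}{dt}W_0=0$, and with $W_0(0)=I$ this gives $W_0(t,\varepsilon)\equiv I$. By continuity the same holds at $t=\pm1$, which is \eqref{z3}.

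Next, for $t\geq 1$ set $F(t)=i\varepsilon\big(C_+(t-1)+B_+\ln t\big)$. The family $\{F(t)\}_{t\ge1}$ is commutative, because $B_+C_+=C_+B_+$ and each of $B_+$, $C_+$ commutes with itself. Moreover $F'(t)=i\varepsilon(C_++B_+/t)=i\varepsilon V_0(t)$ commutes with $F(t)$. Therefore $\exp\{F(t)\}$ can be differentiated termwise in its (norm-convergent, since all operators are bounded) power series, giving
$$\frac{d}{dt}\exp\{F(t)\}=\exp\{F(t)\}F'(t)=i\varepsilon\exp\{F(t)\}V_0(t).$$
Since $F(1)=0$, this function equals $I=W_0(1,\varepsilon)$ at $t=1$. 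Uniqueness of the solution of $W'=i\varepsilon WV_0$ with a given value at $t=1$ then yields \eqref{z1}. One can see uniqueness directly: if $W$ is another solution, then $\frac{d}{dt}\big(W(t)\exp\{-F(t)\}\big)=0$, using the same commutation.

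The case $t\le -1$ is identical, with $F(t)=i\varepsilon\big(C_-(t+1)+B_-\ln|t|\big)$. Here $\frac{d}{dt}\ln|t|=1/t$, so $F'(t)=i\varepsilon V_0(t)$ by \eqref{3.12}, and $F(-1)=0$ matches $W_0(-1,\varepsilon)=I$. This gives \eqref{z2}.

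The only point needing care is the differentiation of the operator exponential. In general $\frac{d}{dt}e^{F(t)}\neq e^{F(t)}F'(t)$, and the identity holds precisely because $[F(t),F'(t)]=0$. This is exactly where the hypothesis $B_\pm C_\pm=C_\pm B_\pm$ is used. Without it, $W_0$ would be a time-ordered exponential rather than the explicit formulas \eqref{z1}--\eqref{z2}.
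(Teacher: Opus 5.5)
Your proof is correct and follows the same route as the paper, which simply asserts that the formulas follow from solving \eqref{3.11}--\eqref{3.12+} directly. You fill in the details the paper omits: solving piecewise, gluing by continuity at $t=\pm1$, justifying $\frac{d}{dt}e^{F(t)}=e^{F(t)}F'(t)$ through $[F(t),F'(t)]=0$ (exactly where $B_\pm C_\pm=C_\pm B_\pm$ is needed), and uniqueness.
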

Let us introduce the regularization $S^{R}(t,\tau,\varepsilon)$ of $S(t,\tau,\varepsilon)$:
\begin{equation}S^{R}(t,\tau,\varepsilon)=W_{0}(t,\varepsilon)S(t,\tau,\varepsilon)W_{0}^{-1}(\tau,\varepsilon) \label{3.16}\end{equation}
It follows from the formulas \eqref{3.2},  \eqref{3.3}, \eqref{3.6}--\eqref{3.6+}, and \eqref{3.11}--\eqref{3.12+}  that
\begin{align}& \frac{\partial}{\partial{t}}S^{R}(t,\tau,\varepsilon)=-i{\varepsilon}U(t,\varepsilon)S^{R}(t,\tau,\varepsilon),\quad
S^{R}(t,t,\varepsilon)=I,\label{3.17}
\\ &
\frac{\partial}{\partial{\tau}}S^{R}(t,\tau,\varepsilon)=i{\varepsilon}S^{R}(t,\tau,\varepsilon)U(\tau,\varepsilon), \quad S^{R}(\tau,\tau,\varepsilon)=I,\label{3.18}\end{align}
where
\begin{equation}U(t,\varepsilon)=W_{0}(t,\varepsilon)u(t)W_{0}^{-1}(t,\varepsilon)\quad (t\in \BR) .\label{3.19}\end{equation}
According to \eqref{z3} and \eqref{3.16}, we have
\begin{align}&\nn
S^{R}(t,\tau,\varepsilon)=S(t,\tau,\varepsilon) \quad {\mathrm{for}} \quad |t|\leq 1,\,\, |\tau|\leq 1.
\end{align}
Using relations \eqref{3.17} and \eqref{3.18}  as well as multiplicative integrals (see, e.g., \cite{DF} or \cite{Sakh4} and references on the topic therein)  we 
write down an important  representation of  $S^{R}(t,\tau,\varepsilon)$.
\begin{Tm}\label{Theorem 3.3}
Let the regularized operator function $S^{R}(t,\tau,\varepsilon)$ be given by \eqref{3.16}.
Then, we have a multiplicative integral reperesentation
\begin{equation} S^{R}(t,\tau,\varepsilon)= \overset{t}{\overset{\curvearrowleft}{\underset{\tau}{\int}}}
e^{-i\varepsilon{U(s,\varepsilon)ds}} \qquad (t\geq \tau).
\label{3.21}\end{equation}
\end{Tm}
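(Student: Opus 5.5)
The plan is to identify $S^{R}(t,\tau,\varepsilon)$ as the unique solution of the linear Cauchy problem \eqref{3.17}. Then I show that the multiplicative integral on the right-hand side of \eqref{3.21} solves the same problem.

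\textbf{Step 1: derive \eqref{3.17} from \eqref{3.16}.} Differentiate \eqref{3.16} in $t$, using \eqref{3.2} and \eqref{3.11}:
\begin{equation*}
\frac{\partial}{\partial t}S^R
= i\varepsilon W_0(t,\varepsilon)\bigl(V_0(t)-V(t)\bigr)S(t,\tau,\varepsilon)W_0^{-1}(\tau,\varepsilon).
\end{equation*}
By \eqref{3.6}--\eqref{3.6+} and \eqref{3.12}, \eqref{3.12+}, we have $V(t)-V_0(t)=u(t)$ for all $t$. Inserting $W_0^{-1}(t,\varepsilon)W_0(t,\varepsilon)=I$ then gives $-i\varepsilon U(t,\varepsilon)S^R$, with $U$ as in \eqref{3.19}. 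The initial condition $S^R(\tau,\tau,\varepsilon)=I$ follows from \eqref{3.3}.

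Two remarks on this step. $W_0(t,\varepsilon)$ is unitary, since $V_0$ is self-adjoint and \eqref{3.11} is a unitary evolution (or directly from Proposition \ref{Proposition 3.2} when the commutativity holds). Hence $U(t,\varepsilon)$ is self-adjoint with $\norm{U(t,\varepsilon)}=\norm{u(t)}$. Also, $V_0$ jumps at $t=\pm1$, but $W_0$ remains continuous. So $U$ is piecewise continuous, and \eqref{3.17} is to be understood for $t\neq\pm1$, with $S^R$ continuous across these points.

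\textbf{Step 2: properties of the multiplicative integral.} Let $M(t,\tau)$ denote the right-hand side of \eqref{3.21}. I will recall from \cite{DF} (or \cite{Sakh4}) that for a bounded, piecewise continuous operator function $U$ on $[\tau,t]$, $M(t,\tau)$ exists as a norm limit of ordered products
\begin{equation*}
\prod^{\curvearrowleft} e^{-i\varepsilon U(s_k,\varepsilon)\Delta s_k}.
\end{equation*}
Moreover, it satisfies $\partial_t M=-i\varepsilon U(t,\varepsilon)M$ and $M(\tau,\tau)=I$. Equivalently, $M$ equals the norm-convergent Volterra (Peano) series
\begin{equation*}
\sum_p (-i\varepsilon)^p\int_{\tau\le s_p\le\dots\le s_1\le t}U(s_1)\cdots U(s_p)\,ds.
\end{equation*}
This series converges for every $\varepsilon$, because its $p$-th term is bounded by $(|\varepsilon|\int_\tau^t\norm{U})^p/p!$.

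\textbf{Step 3: uniqueness.} Set $D=M(t,\tau)^{-1}S^R(t,\tau,\varepsilon)$, where $M^{-1}$ satisfies $\partial_t M^{-1}=i\varepsilon M^{-1}U$. Then $\partial_t D=0$ on each subinterval avoiding $\pm1$. By continuity, $D\equiv D(\tau)=I$, so $S^R=M$. Alternatively, both functions satisfy the same Volterra integral equation
\begin{equation*}
X(t)=I-i\varepsilon\int_\tau^t U(s)X(s)\,ds,
\end{equation*}
and Gronwall's inequality forces them to coincide.

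\textbf{Main obstacle.} The delicate point is the regularity of $U$ and $S^R$. $U$ is only piecewise continuous, because of the jump of $V_0$ at $\pm1$. In addition, \eqref{3.2} is assumed, with no a priori guarantee that $S$ is norm-differentiable. I would handle this by working with the integral forms of \eqref{3.2} and \eqref{3.11} rather than with derivatives, and by splitting $[\tau,t]$ at $\pm1$. The multiplicative integral then factorizes over the pieces, which ensures that the identity holds on each piece and glues by continuity.
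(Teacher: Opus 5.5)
Your proposal is correct and takes the same route as the paper. The paper observes that $S^R$ solves the Cauchy problem \eqref{3.17}, derived from \eqref{3.16}, \eqref{3.2} and \eqref{3.11} exactly as in your Step 1, and identifies the multiplicative integral \eqref{3.21} as its solution by the standard theory of multiplicative integrals; your Steps 2--3 spell out the existence and uniqueness that the paper only cites. One small correction to your ``main obstacle'': $U(t,\varepsilon)=W_0(t,\varepsilon)u(t)W_0^{-1}(t,\varepsilon)$ is in fact continuous at $t=\pm1$, because $W_0$ is continuous there (indeed $W_0(\pm1,\varepsilon)=I$) and $u$ is continuous, so the jumps of $V_0$ and $V$ affect only the separate one-sided derivatives of $W_0$ and $S$ and cancel in \eqref{3.17}.
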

In order to study the asymptotics of  $S^{R}(t,\tau,\varepsilon)$  one may use the inequality
\begin{equation} \Big\| \overset{t}{\overset{\curvearrowleft}{\underset{\tau}{\int}}}e^{F(t)dt}\Big\| \leq \exp\left(\int_{\tau}^{t}\norm{F(t)}dt\right), \quad
{\mathrm{where}} \quad  \tau<t,\label{8.5}\end{equation}
as well as Proposition B.1 and Corollary B.2 in \cite{Sakh4}, which follow from \eqref{8.5}.

It is easy to see that $W_0(t)$ given by \eqref{3.11} is unitary. Hence, $U(t,\ve)$ of the form \eqref{3.19}
is self-adjoint. Therefore, Theorem \ref{Theorem 3.3} implies the following proposition.
\begin{Pn}\label{Proposition 3.4} The operators $S^{R}(t,\tau,\varepsilon)$ $(t\geq \tau)$
are unitary.\end{Pn}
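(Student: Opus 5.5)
The plan is to derive unitarity of $S^{R}(t,\tau,\varepsilon)$ directly from the evolution equations \eqref{3.17}, \eqref{3.18} and the self-adjointness of $U(t,\varepsilon)$. The multiplicative integral \eqref{3.21} will serve only as a guarantee that the solution exists, is bounded via \eqref{8.5}, and is continuous and differentiable in $t$ and $\tau$.

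\textbf{Self-adjointness of $U$.} First I will check that $U(s,\varepsilon)$ is self-adjoint for real $\varepsilon$. By \eqref{3.11}, $\frac{d}{dt}\big(W_0^*W_0\big)=-i\varepsilon V_0^*W_0^*W_0+i\varepsilon W_0^*W_0V_0$. Since $V_0(t)$ is self-adjoint by \eqref{3.12}, \eqref{3.12+}, this vanishes whenever $W_0^*W_0=I$. Uniqueness for this linear bounded ODE then gives $W_0^*W_0\equiv I$. The analogous computation, or simply the explicit formulas of Proposition \ref{Proposition 3.2} when they apply, gives $W_0W_0^*\equiv I$. Hence $W_0$ is unitary, and \eqref{3.19} shows that $U=W_0uW_0^{-1}=W_0uW_0^*$ is self-adjoint.

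\textbf{Isometry.} Fix $\tau$ and set $X(t)=S^{R}(t,\tau,\varepsilon)^*S^{R}(t,\tau,\varepsilon)$. By \eqref{3.17},
\begin{equation*}
X'(t)=\big(i\varepsilon S^{R*}U^*\big)S^{R}+S^{R*}\big(-i\varepsilon U S^{R}\big)=i\varepsilon S^{R*}(U^*-U)S^{R}=0 .
\end{equation*}
Therefore $X(t)=X(\tau)=I$, so $S^{R}$ is an isometry.

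\textbf{Surjectivity.} Next set $Y(t)=S^{R}(S^{R})^*$. Here the $t$-derivative gives $-i\varepsilon[U,Y]$, which does not visibly vanish. So I will instead differentiate in $\tau$ using \eqref{3.18}, with $t$ fixed. This gives $\partial_\tau Y=i\varepsilon S^{R}(U-U^*)S^{R*}=0$, and at $\tau=t$ we have $Y=I$. Hence $S^{R}(S^{R})^*=I$, and $S^{R}$ is unitary. An alternative route is to observe that the solution $Z(t)$ of $Z'=i\varepsilon ZU$ with $Z(\tau)=I$ is a left inverse, so the isometry is invertible.

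\textbf{Main obstacle.} The hard part is rigor rather than algebra. One must justify differentiating products of operator functions in norm, which requires continuity of $U$ (true since $u$ and $W_0$ are continuous) and boundedness on $[\tau,t]$ (available from \eqref{8.5}). One also needs $W_0(t,\varepsilon)$ to be invertible with $W_0^{-1}=W_0^*$, which is exactly why the unitarity of $W_0$ is established first.
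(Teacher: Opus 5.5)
Your argument is correct, but it finishes differently from the paper. The paper also starts from the unitarity of $W_0$, which it calls ``easy to see'', and the resulting self-adjointness of $U$. It then simply invokes the multiplicative integral representation \eqref{3.21}. That integral is a norm limit of ordered products of the unitary factors $e^{-i\varepsilon U(s_k)\Delta s_k}$, so it is unitary, and isometry and co-isometry come out at once.

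You bypass Theorem \ref{Theorem 3.3} and use a conservation-law argument instead: $\partial_t(S^{R*}S^R)=0$ from \eqref{3.17}, and $\partial_\tau(S^RS^{R*})=0$ from \eqref{3.18}, together with $S^R(t,t,\varepsilon)=I$. This needs only the two evolution equations and local boundedness and continuity of $U$, not the existence and identification of the product integral. It also shows that the $t$-equation alone gives only the isometry directly. The paper's route is shorter but relies on the machinery behind \eqref{3.21}.

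Your justification of the unitarity of $W_0$ fills in what the paper leaves implicit. Note that $\frac{d}{dt}(W_0W_0^*)=i\varepsilon W_0V_0W_0^*-i\varepsilon W_0V_0W_0^*=0$ directly, so only $W_0^*W_0=I$ needs the uniqueness argument. Since $V_0$ jumps at $t=\pm1$, that ODE should be handled piecewise.

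One caution about your ``alternative route''. The solution $Z$ of $Z'=i\varepsilon ZU$, $Z(\tau)=I$, satisfies $\frac{d}{dt}(ZS^R)=0$, so it is a left inverse of $S^R$. But in infinite dimensions an isometry with a left inverse need not be surjective: the unilateral shift $T$ has left inverse $T^*$ and is not invertible. To make that route work you would also need $S^RZ=I$. That requires the uniqueness argument for $Y'=-i\varepsilon[U,Y]$, $Y(\tau)=I$. Your main route via \eqref{3.18} does not have this problem.
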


Recall that $\Longrightarrow$ means the convergence by norm.
In view of \eqref{3.8} and  \eqref{3.19}--\eqref{8.5}, the main theorem of this section  is valid (see below).
\begin{Tm}\label{Theorem 3.5} Let  $S^{R}(t,\tau,\varepsilon)$ be given by \eqref{3.16},
and assume that relations \eqref{3.2}, \eqref{3.3}, \eqref{3.6}--\eqref{3.8} and \eqref{3.11}--\eqref{3.12+} hold.
  Then, we have  
  \begin{equation}S^{R}(t,\tau,\varepsilon){\Longrightarrow}S^{R}(+\infty,-\infty,\varepsilon),\quad t{\to}+\infty,\quad
 \tau{\to}-\infty.\label{3.26}
\end{equation}
\end{Tm}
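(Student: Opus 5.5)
The plan is to prove the Cauchy property of $S^{R}(t,\tau,\varepsilon)$ in operator norm as $t\to+\infty$, $\tau\to-\infty$, using only the multiplicative integral representation of Theorem \ref{Theorem 3.3}, the evolution equations \eqref{3.17}, \eqref{3.18}, and the integrability of $\norm{U(s,\varepsilon)}$ on $\BR$.

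\textbf{Step 1: integrability of the generator.} Since $W_0(t,\varepsilon)$ is unitary, \eqref{3.19} gives $\norm{U(s,\varepsilon)}=\norm{u(s)}$. By \eqref{3.8} with $\nu>1$, and by continuity of $u$ on compact sets, we get $\int_{-\infty}^{\infty}\norm{U(s,\varepsilon)}ds<\infty$. The bound \eqref{8.5} then yields the uniform estimate
\begin{equation*}
\norm{S^{R}(t,\tau,\varepsilon)}\leq \exp\Big(|\varepsilon|\int_{\BR}\norm{u(s)}ds\Big)=:M .
\end{equation*}
By Proposition \ref{Proposition 3.4} one may in fact take $M=1$. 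This is needed only to control the products appearing in Step 3.

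\textbf{Step 2: the two-sided difference estimate.} Take $t_2\geq t_1\geq 0\geq \tau_1\geq \tau_2$. The multiplicative integral has the semigroup (cocycle) property, equivalently uniqueness of solutions of \eqref{3.17}, \eqref{3.18}. Hence
\begin{equation*}
S^{R}(t_2,\tau_2,\varepsilon)=S^{R}(t_2,t_1,\varepsilon)\,S^{R}(t_1,\tau_1,\varepsilon)\,S^{R}(\tau_1,\tau_2,\varepsilon).
\end{equation*}
It therefore suffices to show that $\norm{S^{R}(t_2,t_1,\varepsilon)-I}\to 0$ as $t_1\to+\infty$, and that $\norm{S^{R}(\tau_1,\tau_2,\varepsilon)-I}\to0$ as $\tau_1\to-\infty$, uniformly in $t_2,\tau_2$. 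Integrating \eqref{3.17} from $t_1$ to $t_2$ and using Step 1 gives
\begin{equation*}
\norm{S^{R}(t_2,t_1,\varepsilon)-I}\leq |\varepsilon|\int_{t_1}^{t_2}\norm{U(s,\varepsilon)}\norm{S^{R}(s,t_1,\varepsilon)}ds\leq M|\varepsilon|\int_{t_1}^{\infty}\norm{u(s)}ds=O(t_1^{1-\nu}).
\end{equation*}
The same argument with \eqref{3.18} handles the $\tau$-side, giving $O(|\tau_1|^{1-\nu})$. This is essentially Proposition B.1 / Corollary B.2 of \cite{Sakh4}, which may be quoted directly.

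\textbf{Step 3: conclusion.} Writing $S^{R}(t_2,\tau_2)-S^{R}(t_1,\tau_1)=(X-I)S^{R}(t_1,\tau_1)Y+S^{R}(t_1,\tau_1)(Y-I)$, with $X=S^{R}(t_2,t_1)$ and $Y=S^{R}(\tau_1,\tau_2)$, and using the bound $M$, shows that the family is Cauchy in norm. Completeness of the space of bounded operators then gives the limit $S^{R}(+\infty,-\infty,\varepsilon)$, which is unitary as a norm limit of unitaries.

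The only delicate point is Step 2: the argument hinges on the observation that conjugation by the deviation factor removes the non-integrable parts $C_\pm$ and $B_\pm/t$ of $V$ exactly. This is encoded in \eqref{3.17}--\eqref{3.19}, which I take as given. Once that is granted, the estimate is routine, and no commutativity of $B_\pm$ with $C_\pm$ is needed beyond what enters \eqref{3.19}.
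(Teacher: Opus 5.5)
Your proof is correct and is essentially the paper's argument. The paper cites \eqref{3.8} and \eqref{3.19}--\eqref{8.5}: $\norm{U(s,\ve)}=\norm{u(s)}$ is integrable on $\BR$, so the multiplicative integral \eqref{3.21} converges in norm by the estimate \eqref{8.5}, via Proposition B.1 and Corollary B.2 of \cite{Sakh4}; your Steps 2--3 just make that estimate explicit through the cocycle property and the integrated forms of \eqref{3.17} and \eqref{3.18}.
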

Similar to the case in \cite{Sakh4}, 
the unitary operator  $S^{R}(+\infty,-\infty,\varepsilon)$ is called the secondary generalized scattering operator.
\begin{Rk}\label{Remark 3.8} In classical quantum theory, the  operator function $V(t)$ has the form \eqref{1.4}. In our theory, the operators 
$A,\, A_1, \, A_0$ are unknown and we don't suppose that operator $V(t)$ has the form  \eqref{1.4}. 
In our approach, the perturbation operator $V(t)$ is defined using the first approximation $S_1(t,\tau)$.
Indeed, formula \eqref{3.5+} yields
\begin{equation}V(t)=i\frac{\p}{\p t}S_1(t,\tau).
\label{3.30}\end{equation}
\end{Rk}
\section{Ultraviolet example} \label{secEx}
{\bf 1.} In various important ultraviolet examples (see, e.g.,  \cite[Sections 4,5]{Sakh4} and references
therein) the variable $t$ is substituted by  the length variable  $L$  and $\tau$  is substituted by $q=\begin{bmatrix}q_1 & q_2 & q_3 & q_4\end{bmatrix}\in \BR^4$. Instead of
the operators $S(t,\tau,\ve)$ and  $S_1(t,\tau)$, we now consider the operators $S(L, \ve)$ and its first approximation $S_1(L)$, which are defined as the  multiplications by the functions $S(L,q,\ve)$ and $a_1(L,q)$,
respectively.
Moreover, $a_{1}(L,q)$ may be written down as an integral over the four dimensional sphere with radius $L$ (in spherical coordinates):
\begin{equation}
a_{1}(L,q)=-i\int_{1}^{L}\int_{0}^{2\pi}\int_{0}^{\pi}\int_{0}^{\pi}F(p,q)r^{3}\big(\sin^{2}\phi_{1}\big)\sin\phi_2d\phi_1d\phi_2d\phi_3dr,\label{u2}
\end{equation}
where $p=\begin{bmatrix}p_1 & p_2 & p_3 & p_4\end{bmatrix}\in \BR^4$ and
\begin{align}& p_1=r\cos\phi_1,\quad p_2=r\sin\phi_1\cos\phi_2, \label{4.3-}
\\ &
p_3=r\sin\phi_1\sin\phi_2\cos\phi_3, \quad
p_4=r\sin\phi_1\sin\phi_2\sin\phi_3 .\label{4.3}\end{align}
Interesting results on the asymptotic behaviour  of the Feynman integrals \eqref{u2} are derived in \cite{Du} (see also some other papers and references
in \cite{AC}).
The considerations of Section \ref{sec2} are easily reformulated for this case (space-time approach),
see also \cite[Section 4]{Sakh4}. In particular, we set
\begin{align}& \label{u4}
\frac{\partial}{\partial{L}}S(L,q,\varepsilon)=-i{\varepsilon}V(L,q)S(L,q,\varepsilon) \quad (L\geq 1),\quad S(1,q,\ve)=1, 
\\ & \label{4.4+}
 V(L,q)=i\frac{\p}{\p L}a_1(L,q).
\end{align}
It follows from \eqref{u2}--\eqref{4.3} and \eqref{4.4+} that
\begin{align} 
& \label{4.5}
 V(L,q)=\int_{0}^{2\pi}\int_{0}^{\pi}\int_{0}^{\pi}F(P,q)L^{3}\big(\sin^{2}\phi_{1}\big)\sin\phi_2d\phi_1d\phi_2d\phi_3,\\
 & P=\begin{bmatrix}P_1 & P_2 & P_3 & P_4\end{bmatrix}\in \BR^4, \quad
 P_1=L\cos\phi_1,\quad P_2=L\sin\phi_1\cos\phi_2, \nn
\\ &
P_3=L\sin\phi_1\sin\phi_2\cos\phi_3, \quad
P_4=L\sin\phi_1\sin\phi_2\sin\phi_3 .\nn
\end{align}
In the case of 
$$F(p,q)=\frac{p_{\mu}}{\big(p^2-2pq+\ell(q)\big)^2},$$
where $p^2$ and $pq$
are scalar products, we have the so called superficial linear divergence (that is, we have logarithmic
divergence, see \cite[Example 5.3]{Sakh4}).

{\bf 2.} Further, we will consider the case of linear divergence, where
\begin{align}& \label{4.6}
F(p,q)=\frac{|p_{\mu}|}{\big(p^2-2pq+\ell(q)\big)^2}, \quad \ell(q)>q^2,
\end{align}
and   $\ell(q)$ is a continuous function.   It follows from \eqref{4.3} that 
$$p_3^2+p_4^2=   r^2\sin^2\phi_1\sin^2\phi_2.$$ 
Hence,
taking into account \eqref{4.3-} we obtain $p^2=r^2$.      Therefore, we have
\begin{align}& \label{u7}
\frac{r^4}{\big(p^2-2pq+\ell(q)\big)^2}=\left(1-\frac{2}{r^2}pq+\frac{\ell(q)}{r^2}\right)^{-2}.
\end{align}
We also have
\begin{align}\nn
\left(1-\frac{2}{r^2}pq+\frac{\ell(q)}{r^2}\right)^{2} &=1-2\left(\frac{2}{r^2}pq-\frac{\ell(q)}{r^2}\right)+\left(\frac{2}{r^2}pq-\frac{\ell(q)}{r^2}\right)^2
\\ &  \label{u8}
=1-\frac{4}{r^2}pq+O\left(\frac{1}{r^2}\right) \quad {\mathrm{for}} \quad r\to \infty .
\end{align}
Relations \eqref{u7} and \eqref{u8} yield
\begin{align}& \label{u9}
\frac{r^4}{\big(p^2-2pq+\ell(q)\big)^2}=1+\frac{4}{r^2}pq+O\left(\frac{1}{r^2}\right).
\end{align}

We will set $\mu=1$ in \eqref{4.6} (with $p_1$ given by \eqref{4.3-}) since other cases are all treated in a similar way. Then,
formulas  \eqref{4.5}, \eqref{4.6} and \eqref{u9} imply that
\begin{align}& \label{4.10}
a_1(L,q)=-i \left(C_+ (L-1) +B_+(q)\ln L+\int_1^L u(r,q)dr \right),
\end{align}
where
\begin{align}& \label{4.11}
 C_+=\int_{0}^{2\pi}\int_{0}^{\pi}\int_{0}^{\pi}|\cos \phi_1|\big(\sin^{2}\phi_{1}\big)\sin\phi_2 d\phi_1d\phi_2d\phi_3,
\\  & \label{4.12}
 B_+(q)=4\int_{0}^{2\pi}\int_{0}^{\pi}\int_{0}^{\pi}(pq/r)|\cos \phi_1|\big(\sin^{2}\phi_{1}\big)\sin\phi_2d\phi_1d\phi_2d\phi_3,
 \\ &  \label{4.14} \|u(r, q)\| = O\big(r^{-2}\big).
 \end{align}
Here, the operator $B_+$ is the operator  of multiplication by $B_+(q)$ in the space of
the functions of $q$, where $|q| \leq M<\infty$, $C_+$ is the multiplication by constant in the same space, and $u(r)$ is  the multiplication
by $u(r,q)$ (compare \eqref{4.10} with \eqref{3.9}). Note also that $B_+(q)$ given by \eqref{4.12} does not depend on $r$ because the expression $p/r$ does not depend on $r$.
In view of \eqref{4.12}, the operator $B_+$ is bounded. Clearly, $B_+$ and $C_+$  commute. Finally, \eqref{4.4+} and \eqref{4.10} yield
\begin{align}& \label{u13}
V(L,q)=C_+ + \frac{1}{L}B_+(q)+u(L,q).
\end{align}

In analogy with \eqref{z1},  we introduce the
deviation factor by the formula
 \begin{align}    & \label{4.15}
W_0(L,q,\ve)=\exp\big\{i\ve\big(C_+(L-1)+B_+(q)\ln L\big)\big\} \quad {\mathrm{for}} \quad L\geq 1.
\end{align}
Similar to the case of ultraviolet divergence in \cite{Sakh4}, a regularized scattering function $S^R$ is introduced  by the formula
\begin{equation}S^{R}(L,q,\varepsilon)=W_{0}(L,q,\varepsilon)S(L,q,\varepsilon), \label{u10}\end{equation}
and the  regularized scattering operator  $S^{R}(L,\varepsilon)$ is the operator of multiplication by $S^{R}(L,q,\varepsilon)$.
It follows from \eqref{u4} and \eqref{u13}--\eqref{u10} that
\begin{equation}\frac{\p}{\p L}S^{R}(L,q,\varepsilon)=-i{\varepsilon}U(L,q,\varepsilon)S^{R}(L,q,\varepsilon),\quad
S^{R}(1,q,\varepsilon)=1,\label{u11}\end{equation}
where
\begin{equation}U(L,q,\varepsilon)=W_{0}(L,q,\varepsilon)u(L,q)W_{0}^{-1}(L,q,\varepsilon),\quad L \geq 1.\label{u12}\end{equation}
Using multiplicative integrals (see \cite[Appendix B]{Sakh4} and the references therein) we represent $S^{R}(L,q,\varepsilon)$ in the
form:
\begin{equation} S^{R}(L,q,\varepsilon)= \overset{L}{\overset{\curvearrowleft}{\underset{1}{\int}}}
e^{-i\varepsilon{U(r,q,\varepsilon)dr}},\quad L \geq 1.\label{u15}\end{equation}

{\it In view of  \eqref{4.14} and \eqref{u11}--\eqref{u15} we obtain our result for the secondary generalized scattering operator $S^{R}(+\infty,\varepsilon)$
of multiplication by $S^{R}(+\infty,q,\varepsilon)$}:
\begin{equation}S^{R}(L,\varepsilon){\Longrightarrow}S^{R}(+\infty,\varepsilon)\quad 
{\mathrm{for}} \quad L{\to}+\infty,\label{u16}
\end{equation}
{\it where the symbol $\Longrightarrow$ denotes convergence by norm and}
\begin{equation}S^{R}(+\infty,q,\varepsilon)= \overset{+\infty}{\overset{\curvearrowleft}{\underset{1}{\int}}}
e^{-i\varepsilon{U(r,q,\varepsilon)dr}}. \label{u17}\end{equation}

\end{document}